\documentclass[a4paper,UKenglish]{article}
\usepackage{microtype}
\usepackage{bussproofs}
\usepackage{stmaryrd}
\usepackage{amsmath}
\usepackage{amssymb}
\usepackage{hyperref}

\newtheorem{theorem}{Theorem}
\newtheorem{proof}{Proof}

\newcommand{\new}{\mathsf{new}}
\newcommand{\for}{\mathrm{for }}
\newcommand{\Ncal}{\mathcal{N}}
\newcommand{\interp}[1]{\llbracket #1 \rrbracket}
\newcommand{\interpp}[1]{\{\!| #1 |\!\}}
\newcommand{\from}{\leftarrow}
\newcommand{\maps}{\colon}
\newcommand{\Th}{\mathrm{Th}}
\newcommand{\Gph}{\mathrm{Gph}}
\newcommand{\FinSet}{\mathrm{FinSet}}
\newcommand{\FPGphCat}{\mathrm{FPGphCat}}

\newcommand{\op}{\mathrm{op}}

\newcommand{\pic}{$\pi$-calculus}

\begin{document}
\title{Representing operational semantics with enriched Lawvere theories}
\author{
Michael Stay\\
  {Pyrofex Corp.}\\
  {\fontsize{8}{8}\selectfont stay@pyrofex.net}\\
  \and
  L.G. Meredith\\
  {RChain Cooperative}\\
  {\fontsize{8}{8}\selectfont greg@rchain.coop}
}
\maketitle
\begin{abstract}
  \noindent
  Many term calculi, like $\lambda$-calculus or {\pic}, involve
  binders for names, and the mathematics of bound variable names is
  subtle.  Sch\"onfinkel introduced the SKI combinator calculus in
  1924 to clarify the role of quantified variables in intuitionistic
  logic by eliminating them. Yoshida demonstrated how to
  eliminate the bound names coming from the input prefix in the
  asynchronous {\pic}, but her combinators still depend on the $\new$
  operator to bind names.  Recently, Meredith and Stay
  showed how to modify Yoshida's combinators by replacing $\new$ and
  replication with reflective operators to provide the first
  combinator calculus with no bound names into which the asynchronous
  {\pic} has a faithful embedding. Here we provide an alternative set
  of combinators built from $\mathsf{SKI}$ plus reflection that also
  eliminates all nominal phenomena, yet provides a faithful
  embedding  of a reflective higher-order pi calculus. 
  We show that with the nominal features effectively
  eliminated as syntactic sugar, multisorted Lawvere theories enriched
  over graphs suffice to capture the operational semantics of the
  calculus.
\end{abstract}

\EnableBpAbbreviations

\section{Introduction}
Many term calculi, like $\lambda$-calculus or {\pic}, involve binders
for names, and the mathematics of bound variable names is subtle.
Sch\"onfinkel introduced the SKI combinator calculus in 1924 to
clarify the role of quantified variables in intuitionistic logic by
eliminating them \cite{finkel}. Yoshida demonstrated how to eliminate
the bound names coming from the input prefix in the asynchronous
{\pic}, but her combinators still depend on the $\new$ operator to
bind names. Curry developed Sch\"onfinkel's ideas much
further. Recently, Meredith and Stay \cite{Rhocomb} showed how to
modify Yoshida's combinators by replacing $\new$ and replication with
reflective operators to provide the first combinator calculus with no
bound names into which the asynchronous {\pic} has a faithful
embedding of a reflective higher-order pi calculus. 
Here we provide an alternative set of combinators built
from $\mathsf{SKI}$ plus reflection that also eliminates all nominal
phenomena, yet provides a faithful embedding.

The recent work by Jamie Gabbay and Andrew Pitts
\cite{DBLP:journals/fac/GabbayP02} and others
\cite{DBLP:journals/jcss/Clouston14} on nominal set theory has put the
study of bound names and substitution on a much nicer foundation, at
the cost of an increase in the complexity of the semantic
framework. It interprets nominal phenomena in terms of atoms in
Fraenkl-Mostowski set theory. Clouston's work in particular makes
evident the additional machinery needed to interpret nominal phenomena
as Lawvere theories. On the other hand, with the nominal features
effectively eliminated as syntactic sugar, we show that multisorted
Lawvere theories enriched over graphs suffice to capture the
operational semantics of the calculus.

\section{Previous work}

There is a long history and an enormous body of work on modeling term rewriting and operational semantics with various notions of category enriched over category-like structures; we only have room here for a sampling.  L\"uth and Ghani \cite{DBLP:conf/ctcs/LuethG97} use poset-enriched categories to study the modularity of strong normalization.  One approach to nominality is the one we mentioned in the introduction; a different approach deals with nominal issues by allowing ``funtion types'' in the signature:  Seely \cite{DBLP:conf/lics/Seely87} 
suggested using 2-categories for modeling the denotational semantics of lambda calculus in Scott domains to capture the adjunction between $\beta$ reduction and $\eta$ conversion;  Hilken \cite{DBLP:journals/tcs/Hilken96}
expands Seely's work by exploring the proof theory using categorical logic; and Hirschowitz \cite{DBLP:journals/corr/Hirschowitz13}
generalizes algebraic signatures to cartesian closed 2-signatures.  A third approach is to model substitution explicitly: Stell \cite{Stell}
considered sesquicategories for term rewriting; in his system, objects are finite sets of variables, morphisms are substitutions, and 2-morphisms are roughly rewrite rules.

\section{Gph-enriched categories}
Here we review some standard definitions and results in enriched category theory; see \cite{CIS-335497}, \cite{Power99EnrichedLawvereTheories}, \cite{DBLP:journals/acs/LackR11}, and \cite{Trimble} for more details.

A {\bf directed multigraph with self loops}, hereafter {\bf graph}, consists of a set $E$ of edges, a set $V$ of vertices, two functions $s,t\maps E \to V$ picking out the source and target of each edge, and a function $a\maps V \to E$ such that $s\circ a$ and $t \circ a$ are both the identity on $V$---that is, $a$ equips each vertex in $V$ with a chosen self loop.  There are no constraints on $E, V, s,$ or $t$, so a graph may have infinitely many vertices and infinitely many edges between any pair of vertices.  A {\bf graph homomorphism} from $(E, V, s, t, a)$ to $(E', V', s', t', a')$ is a pair of functions $(\epsilon\maps E \to E', \upsilon\maps V \to V')$ such that $\upsilon\circ s = s' \circ \epsilon$ and $\upsilon\circ t = t' \circ \epsilon$.  {\bf Gph} is the category of graphs and graph homomorphisms.  Gph has finite products: the terminal graph is the graph with one vertex and one loop, while the product of two graphs $(E, V, s, t, a) \times (E', V', s', t', a')$ is $(E \times E', V \times V', s \times s', t\times t', a \times a').$

A {\bf Gph-enriched category} consists of
\begin{itemize}
  \item a set of objects;
  \item for each pair of objects $x, y,$ a graph $\hom(x,y);$
  \item for each triple of objects $x, y, z,$ a composition graph homomorphism $\circ\maps \hom(y, z) \times \hom(x, y) \to \hom(x, z);$ and
  \item for each object $x,$ a vertex of $\hom(x, x),$ the identity on $x,$
\end{itemize}
such that composition is associative, and composition and the identity obey the unit laws.  A Gph-enriched category has finite products if the underlying category does.

Any category is trivially Gph-enrichable by treating the elements of the hom sets as vertices and adjoining a self loop to each vertex.  The category Gph is nontrivially Gph-enriched: Gph is a topos, and therefore cartesian closed, and therefore enriched over itself.  Given two graph homomorphisms $F, F'\maps (E, V, s, t, a) \to (E', V', s', t', a'),$ a {\bf graph transformation} assigns to each vertex $v$ in $V$ an edge $e'$ in $E'$ such that $s'(e') = F(v)$ and $t'(e') = F'(v).$  Given any two graphs $G$ and $G',$ there is an exponential graph $G'^G$ whose vertices are graph homomorphisms between them and whose edges are graph transformations.

A {\bf Gph-enriched functor} between two Gph-enriched categories $C, D$ is a functor between the underlying categories such that the graph structure on each hom set is preserved, {\em i.e.} the functions between hom sets are graph homomorphisms between the hom graphs.

Let $S$ be a finite set, $\FinSet$ be a skeleton of the category of finite sets and functions between them, and $\FinSet/S$ be the category of functions into $S$ and commuting triangles.  A {\bf multisorted Gph-enriched Lawvere theory}, hereafter {\bf Gph-theory} is a Gph-enriched category with finite products Th equipped with a finite set $S$ of {\bf sorts} and a Gph-enriched functor $\theta\maps \FinSet^{\op}/S \to \Th$ that preserves products strictly.  Any Gph-theory has an underlying multisorted Lawvere theory given by forgetting the edges of each hom graph.

A {\bf model} of a Gph-theory Th is a Gph-enriched functor from Th to Gph that preserves products up to natural isomorphism.  A {\bf homomorphism of models} is a braided Gph-enriched natural transformation between the functors.  Let FPGphCat be the 2-category of small Gph-enriched categories with finite products, product-preserving Gph-functors, and braided Gph-natural transformations.  The forgetful functor $U\maps \FPGphCat[\Th, \Gph] \to \Gph$ that picks out the underlying graph of a model has a left adjoint that picks out the free model on a graph.

Gph-enriched categories are part of a spectrum of 2-category-like structures.  A strict 2-category is a category enriched over Cat with its usual product.  Sesquicategories are categories enriched over Cat with the ``funny'' tensor product \cite{Lack2010}; a sesquicategory can be thought of as a 2-category where the interchange law does not hold.  A Gph-enriched category can be thought of as a sesquicategory where 2-morphisms (now edges) cannot be composed.  Any strict 2-category has an underlying sesquicategory, and any sesquicategory has an underlying Gph-enriched category; these forgetful functors have left adjoints.

\section{Gph-theories as models of computation}

Lawvere theories and their generalizations are categories with infinitely many objects and morphisms, but most theories of interest are finitely generated.  A presentation of the underlying multisorted Lawvere theory of a finitely-generated Gph-theory is a signature for a term calculus, consisting of a set of sorts, a set of term constructors, and a set of equations, while the edges in the hom graphs of the theory encode the reduction relation.

Here is a presentation of the SKI combinator calculus as a Gph-theory:
\begin{itemize}
  \item one sort $T$, for terms
  \item term constructors
  \[\begin{array}{rl}
    S&:1 \to T\\
    K&:1 \to T\\
    I&:1 \to T\\
    (-\; -)&: T^2 \to T\\
  \end{array}\]
  \item structural congruence (no equations)
  \item rewrites
  \[\begin{array}{rl}
    \sigma&:(((S\; x)\; y)\; z) \Rightarrow ((x\; z)\; (y\; z))\\
    \kappa&:((K\; y)\; z) \Rightarrow y\\
    \iota&:(I\; z) \Rightarrow z\\
  \end{array}\]
\end{itemize}
where in the rewrites we have used expressions like $((K\; y)\; z)$ as shorthand for
\[ T\times T \xrightarrow{\mbox{\tiny left}^{-1}} 1\times T \times T \xrightarrow{K \times T \times T} T\times T \times T \xrightarrow{(-\;-)\times T} T\times T \xrightarrow{(-\;-)} T. \]

A model $M$ of this Gph-theory in Gph picks out a graph $M(T)$ of terms and rewrites.  It picks out three special vertices $S,K,$ and $I$ of $M(T)$; it equips $M(T)$ with a graph homomorphism from $M(T)^2$ to $M(T)$ that says for every pair of vertices $(u,v),$ there is a vertex $(u\;v)$, and similarly for edges; and it equips $M(T)$ with graph transformations asserting the existence of an edge out of a reducible expression to the term it reduces to.

That this Gph-theory captures the operational semantics of the SKI calculus is almost definitional: there is an edge between distinct vertices in the free model on the empty graph if and only if the source vertex is reducible to the target vertex in a single step.

It is straightforward to verify that Gph-theories suffice to capture the operational semantics of any calculus where every context is a reduction context.  This restriction on reduction contexts is a consequence of the fact that models map term constructors to graph homomorphisms: given a model $M$, a graph homomorphism $F\maps M(T) \to M(T)$, and an edge $e\maps t_1 \to t_2,$ there is necessarily an edge $F(e)\maps F(t_1) \to F(t_2).$

\section{Gph-theory for SKI with the weak head normal form evaluation strategy}
\label{whnf}
In modern programming languages, many contexts are {\em not} reduction contexts.  In Haskell, for instance, there are no reductions under a lambda abstraction: even if $t_1$ reduces to $t_2$ as a program, the term $\backslash x \to t_1$ does not reduce to $\backslash x \to t_2.$

Gph-theories can still capture the operational semantics of calculi with restrictions on reduction contexts by introducing term constructors that explicitly mark the reduction contexts.  For example, suppose that we want an evaluation strategy for the SKI calculus that only reduces the leftmost combinator when it has been applied to sufficiently many arguments, {\em i.e.} we want the {\em weak head normal form}; we can accomplish this by introducing a term constructor $R\maps T \to T$ that explicitly marks the reduction contexts.  We then add a structural congruence rule for propagating the context and modify the existing reduction rules to apply only to marked contexts.

\begin{itemize}
  \item one sort $T$, for terms
  \item term constructors
  \[\begin{array}{rl}
    S&:1 \to T\\
    K&:1 \to T\\
    I&:1 \to T\\
    (-\; -)&: T^2 \to T\\
    R&:T \to T\\
  \end{array}\]
  \item structural congruence
  \[\begin{array}{rl}
    R(x\; y) &= (Rx\; y)\\
  \end{array}\]
  \item rewrites
  \[\begin{array}{rl}
    \sigma&:(((RS\; x)\; y)\; z) \Rightarrow ((Rx\; z)\; (y\; z))\\
    \kappa&:((RK\; y)\; z) \Rightarrow Ry\\
    \iota&:(RI\; z) \Rightarrow Rz\\
  \end{array}\]
\end{itemize}

\begin{theorem}
  Let $t$ be a term in which $R$ does not appear.  Then $Rt$ reduces to $Rt',$ where $t'$ is the weak head normal form of $t.$
\end{theorem}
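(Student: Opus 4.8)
\emph{Proof proposal.} The strategy is to follow the single occurrence of $R$ through the computation: I claim it always marks the head of the term being evaluated, so that $\sigma$, $\kappa$ and $\iota$ can fire only in the positions prescribed by the weak head normal form strategy --- even though, in the Gph-theory, a reduction is formally permitted in every context, since models send the term constructors to graph homomorphisms. The first ingredient is a \emph{spine lemma}: every $R$-free SKI term $t$ has a unique decomposition as a left-nested application $(\cdots((h\,a_1)\,a_2)\cdots a_n)$ with $h\in\{S,K,I\}$ and $a_1,\dots,a_n$ again $R$-free (immediate induction on $t$), and, iterating the structural congruence $R(x\,y)=(Rx\,y)$, in the free model on the empty graph $Rt$ equals $(\cdots((Rh\,a_1)\,a_2)\cdots a_n)$; that is, the marker migrates down the leftmost spine to the head combinator, and $Rt$ contains exactly one $R$.

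Next I would define the head-reduction relation $t\to_h t''$ on $R$-free terms by letting $\sigma$, $\kappa$ or $\iota$ act at the head of the spine when $n$ is at least the arity of $h$ (and leaving $\to_h$ undefined otherwise), so that the weak head normal form $t'$ of $t$ is precisely what iterating $\to_h$ reaches. The heart of the proof is the claim that $t\to_h t''$ implies there is an edge $Rt\Rightarrow Rt''$ in the free model. In the case $h=S$, $n\ge 3$, the spine lemma gives $Rt=(((RS\,a_1)\,a_2)\,a_3)\,a_4\cdots a_n$, whose head subterm $(((RS\,a_1)\,a_2)\,a_3)$ is a genuine $\sigma$-redex; since a model sends the term constructors to graph homomorphisms, the applicative context $((-\,a_4)\cdots a_n)$ carries the $\sigma$-edge to an edge $Rt\Rightarrow(((Ra_1\,a_3)\,(a_2\,a_3))\,a_4)\cdots a_n$, and one more pass of the structural congruence identifies this target with $Rt''$ for $t''=(a_1\,a_3)\,(a_2\,a_3)\,a_4\cdots a_n$. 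The cases of $\kappa$ ($h=K$, $n\ge 2$) and $\iota$ ($h=I$, $n\ge 1$) are entirely analogous.

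It then remains to check that a weak head normal form is stuck: if $t$ is in weak head normal form then by the spine lemma $Rt=(\cdots((Rh\,a_1)\,a_2)\cdots a_n)$ with $n$ strictly below the arity of $h$, so no subterm matches the left-hand side of $\sigma$, $\kappa$ or $\iota$; and since each such left-hand side contains an $R$ while the only $R$ in $Rt$ is the displayed one (the $a_i$ being $R$-free), $Rt$ has no outgoing edges at all. Assembling the pieces: given $t$ with weak head normal form $t'$, unfold $t=t_0\to_h t_1\to_h\cdots\to_h t_k=t'$; the key claim turns this into a reduction $Rt=Rt_0\Rightarrow Rt_1\Rightarrow\cdots\Rightarrow Rt_k=Rt'$ in the free model, and the previous observation shows $Rt'$ is terminal. (In fact the head redex is the unique redex of each $Rt_i$, so this path is forced, which is why ``reduces to'' is unambiguous here.)

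The step I expect to be delicate is the key claim in the second paragraph: one must verify that ``contract the head redex'' and ``bubble $R$ down to the head'' commute, so that after a rewrite the marker lands exactly on the head of the contracted spine, with the lone $R$ never duplicated or stranded. This needs a little care in the $\kappa$ case, where the discarded argument $z$ may carry a nontrivial tail of the spine and $R$ must reattach cleanly to the head of $y$ after re-bubbling, and in the $\sigma$ case, where $a_3$ is copied but no $R$ is copied along with it.
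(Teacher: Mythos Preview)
Your proposal is correct and follows essentially the same approach as the paper: both rest on the observation that the single $R$ is confined to the leftmost spine (so only head redexes ever match a rewrite rule), followed by induction on the reduction sequence to weak head normal form. Your version is considerably more explicit --- spelling out the spine decomposition, the step-by-step correspondence between head reduction and $R$-marked reduction, and the stuckness argument --- whereas the paper's proof is a two-sentence sketch that states the spine-confinement observation and invokes induction.
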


\begin{proof}
If we form the term $Rt$ where $t$ contains no uses of $R$, no reductions will ever take place in the right-hand argument of an application: the structural congruence and rewrite rules enforce that the $R$ context can only move to the left term in an application, never the right.  The result follows by induction on the number of steps to reach $t'.$
\end{proof}

\section{Explicit reduction contexts as gas}
The Ethereum \cite{wood2014ethereum} and RChain \cite{RChain} projects are building virtual machines on the blockchain.  Both use the concept of a linear resource called ``gas'' (as in gasoline) that is consumed as the virtual machine executes.  Gph-theories can capture the operational semantics of a calculus where reduction contexts are consumable, and thus play a role similar to that of gas \cite{DBLP:journals/corr/StayM15}.

\begin{itemize}
  \item one sort $T$, for terms
  \item term constructors
  \[\begin{array}{rl}
    S&:1 \to T\\
    K&:1 \to T\\
    I&:1 \to T\\
    (-\; -)&: T^2 \to T\\
    R&:T \to T\\
  \end{array}\]
  \item structural congruence
  \[\begin{array}{rl}
    R(x\; y) &= (Rx\; y)\\
  \end{array}\]
  \item rewrites
  \[\begin{array}{rl}
    \sigma&:(((RS\; x)\; y)\; z) \Rightarrow ((x\; z)\; (y\; z))\\
    \kappa&:((RK\; y)\; z) \Rightarrow y\\
    \iota&:(RI\; z) \Rightarrow z\\
  \end{array}\]
\end{itemize}

\begin{theorem}
  Let $t$ be a term in which $R$ does not appear; let $t'$ be the weak head normal form of $t$; let $m$ be the number of steps by which $Rt$ reduces to $Rt'$ in the calculus of section \ref{whnf}; and let $n\ge m$.  Then in this calculus, $R^n t$ reduces to $R^{n-m}t'$ in $m$ steps.
\end{theorem}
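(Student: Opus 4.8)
\section*{Proof proposal}

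The plan is to simulate the weak-head-normal-form reduction of section~\ref{whnf} step by step, spending one copy of $R$ at each step. Fix a reduction sequence $Rt = u_0 \Rightarrow u_1 \Rightarrow \cdots \Rightarrow u_m = Rt'$ in the calculus of section~\ref{whnf}, which exists by the theorem there. The governing invariant is that each $u_i$ contains exactly one occurrence of $R$: the term $Rt$ does, since $t$ does not, and all three rewrites of section~\ref{whnf}, together with the structural congruence $R(x\;y) = (Rx\;y)$, preserve the number of $R$'s. Moreover, that congruence only slides an $R$ between the outside of an application and its left argument, and the rewrites leave the new $R$ on the leftmost spine; so the lone $R$ of $u_i$ always lies on that spine, whence $u_i$ is structurally congruent (written $\equiv$) to $R\,\bar u_i$ for a unique $R$-free term $\bar u_i$, with $\bar u_0 = t$ and $\bar u_m = t'$.

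First I would extract the consequence that makes the simulation work. In the step $u_i \Rightarrow u_{i+1}$, the redex that fires is one of $(((RS\;x)\;y)\;z)$, $((RK\;y)\;z)$, $(RI\;z)$, each of which already contains an $R$; since $u_i$ has only one, the redex must contain \emph{it}, so the surrounding context $C$ is $R$-free, with its hole on the leftmost spine. Rewriting the step through the congruence, the $\sigma$ case reads $u_i \equiv C[(((RS\;x)\;y)\;z)] \Rightarrow C[((Rx\;z)\;(y\;z))] \equiv R\,C[((x\;z)\;(y\;z))]$, and likewise $u_i \equiv C[((RK\;y)\;z)] \Rightarrow R\,C[y]$ and $u_i \equiv C[(RI\;z)] \Rightarrow R\,C[z]$. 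This both pins down $\bar u_{i+1}$, namely $C[((x\;z)\;(y\;z))]$, $C[y]$, or $C[z]$, and exhibits $\bar u_i$ as $C$ filled with the matching $R$-free redex $(((S\;x)\;y)\;z)$, $((K\;y)\;z)$, or $(I\;z)$.

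Next I would prove a one-step simulation lemma: for every $k \ge 1$, if $R\,\bar u \Rightarrow R\,\bar u'$ is a step of the section~\ref{whnf} calculus via a context $C$ as above, then $R^{k}\bar u \Rightarrow R^{k-1}\bar u'$ is a single step of the present calculus. In the $\sigma$ case, $R^{k}\bar u \equiv C[R^{k}(((S\;x)\;y)\;z)] \equiv C[R^{k-1}(((RS\;x)\;y)\;z)]$, the present $\sigma$-rule fires on the inner redex to give $C[R^{k-1}((x\;z)\;(y\;z))] \equiv R^{k-1}C[((x\;z)\;(y\;z))] = R^{k-1}\bar u'$, and the $\kappa$ and $\iota$ cases are identical, the inner rule producing $C[R^{k-1}y]$ and $C[R^{k-1}z]$. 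Applying this lemma with $k = n-i$ to each step $u_i \Rightarrow u_{i+1}$ for $0 \le i < m$---which is legitimate precisely because $n - i \ge n - m + 1 \ge 1$---and composing the $m$ resulting steps yields $R^{n}t = R^{n}\bar u_0 \Rightarrow^{m} R^{n-m}\bar u_m = R^{n-m}t'$.

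The only genuine content is setting up the invariant: that the single $R$ stays on the leftmost spine, that every whnf-redex must swallow it so the surrounding context is $R$-free, and hence that the spare $R^{k-1}$ rides along untouched and loses exactly one member per step. After that, each step is a routine rearrangement by the structural congruence. The hypothesis $n \ge m$ is exactly the condition that a copy of $R$ is always available to spend; were $n < m$, the simulation would instead stall after $n$ steps at $R^{0}\bar u_n = \bar u_n$, matching the intended reading of $R$ as a unit of exhaustible gas.
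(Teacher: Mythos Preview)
Your proof is correct and follows the same approach as the paper's: both argue by induction on the $m$ reduction steps, using that $R$ stays on the leftmost spine, that structural congruence preserves the $R$-count, and that each rewrite in the gas calculus consumes exactly one $R$. The paper's proof is a two-sentence sketch of this; you have carefully fleshed out the step-by-step simulation and the one-step lemma that the paper leaves implicit.
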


\begin{proof}
As before, if we form the term $Rt$ where $t$ contains no uses of $R$, no reductions will ever take place in the right-hand argument of an application.  Each application of the reduction rules reduces the number of $R$s by one, and structural equivalence preserves the number of $R$s.  The result follows by induction on the number of steps to reach $t'.$
\end{proof}

\section{Gph-theory for a pi calculus variant}
\label{rhocomb}
Gph-theories can capture the operational semantics of concurrent calculi as well as serial calculi like SKI above.

  Meredith and Radestock \cite{DBLP:journals/entcs/MeredithR05} describe a reflective higher-order variant of pi calculus we call the RHO calculus.  Rather than the usual replication and $\new$ operators, they have quoting and unquoting operators.  Quoting turns a process into a name and unquoting does the opposite; freshness of names is obtained using a type discipline.  They prove that there is a faithful embedding of the monadic asynchronous pi calculus into the RHO calculus.

\subsection{The RHO calculus}
\subsubsection{Syntax}
\[\begin{array}{rlr}
  P, Q &::= 0 & \mbox{the stopped process}\\ 
  &| \quad \for(y \from x)P & \mbox{input guarded process} \\ 
  &| \quad x!P & \mbox{output process}\\ 
  &| \quad P \;|\; Q & \mbox{parallel composition}\\
  &| \quad *x & \mbox{deference}\\ 
  &\\
  x, y &::= \&P & \mbox{quotation}\\ 
\end{array}\]

Note that in the original rho-calculus papers the notation was
somewhat different. The quotation and dereference constructions were
originally written, $\ulcorner P \urcorner$ and $\urcorner x \ulcorner$,
respectively. Here we have adopted a more programmer
friendly style employing the $\&$ and $*$ of the $\mathsf{C}$ programming
language for reference (quotation) and dereference, respectively. Input
guards which were written with a whimper $?$ in more traditional process calculi style  are now written in for-comprehension style as adopted
in languages like $\mathsf{Scala}$; e.g. $x?(y)P$ is written
here $\mathsf{for}( y \from x )P$.

\subsubsection{Free and bound names}
\[\begin{array}{rl}
FN(0) &= \emptyset \\
FN(\for(y \from x)P) &= \{x\}\cup (FN(P)\backslash \{y\}) \\
FN(x!P) &= \{x\}\cup FN(P) \\
\end{array}\quad\quad
\begin{array}{rl}
FN(P|Q) &= FN(P)\cup FN(Q) \\
FN(*x) &= \{x\}
\end{array}\]
\subsubsection{Structural congruence}
Structural (process) congruence is the smallest congruence $\equiv$ containing $\alpha$-equivalence and making $(|, 0)$ into a commutative monoid.
\subsubsection{Name equivalence}
Name equivalence is the smallest equivalence relation $\equiv_N$ on names such that 
\begin{center}
  \AXC{} \UIC{$\&*x \equiv_N x$} \DP and \AXC{$P \equiv Q$} \UIC{$\&P \equiv_N \&Q$} \DP.
\end{center}
\subsubsection{Substitution}
Syntactic substitution:
\[\begin{array}{rl}
  (0)\{\&Q/\&P\} &= 0\\
  (\for (y \from x) R)\{\&Q/\&P\} &= \for (z \from (x\{\&Q/\&P\})) (R\{z/y\}\{\&Q/\&P\})\\
  (x!R)\{\&Q/\&P\} &= (x\{\&Q/\&P\})!(R\{\&Q/\&P\})\\
  (R|S)\{\&Q/\&P\} &= (R\{\&Q/\&P\}) \;|\; (S\{\&Q/\&P\})\\
  (*x)\{\&Q/\&P\} &= \left\{\begin{array}{rl}
    *\&Q & \mbox{when } x \equiv_N \&Q\\
    *x & \mbox{otherwise,}
  \end{array}\right.
\end{array}\]
where
\[ x\{\&Q/\&P\} = \left\{\begin{array}{rl}
                  \&Q & \mbox{if } x\equiv_N \&P \\
                  x & \mbox{ otherwise}
                \end{array}\right.\]
and, in the rule for input, $z$ is chosen to be distinct from $\&P, \&Q,$ the free names in $Q,$ and all the names in $R.$

Semantic substitution, for use in $\alpha$-equivalence:
\[ (*x)\{\&Q/\&P\} = \left\{\begin{array}{rl}
  Q & \mbox{when } x \equiv_N \&Q\\
  *x & \mbox{otherwise}
\end{array}\right. \]

\subsubsection{Reduction rules}
We use $\to$ to denote single-step reduction.
\begin{center}
\AXC{$x_0 \equiv_N x_1$} 
\UIC{$\for(y \from x_1)P \;|\; x_0!Q \quad \to\quad P\{\&Q / y\}$} \DP \quad \quad
\end{center}

\begin{center}
\AXC{$P\to P'$}
\UIC{$P\;|\; Q \quad \to \quad P' \;|\; Q$} \DP
\end{center}

\begin{center}
\AXC{$P\equiv P'$} \AXC{$P' \to Q'$} \AXC{$Q' \equiv Q$}
\TIC{$P\to Q$} \DP
\end{center}
\subsection{RHO combinators}
We can define an embedding $\interp{-}$ of closed RHO calculus terms into a set of RHO combinators.  We follow Milner \cite{milner91polyadicpi} in thinking of an input-prefixed process $\for(x \from y)P$ as consisting of two parts: the first names the channel $y$ on which the process is listening, while the second describes the continuation $\lambda x.P$ in terms of an abstracted name.  The right hand side of the communication rule, in effect, applies the continuation to the name to be substituted.  Since the only bound names in the RHO calculus come from input prefixing, we can completely eliminate bound names by using abstraction elimination on the continuation.  Like the weak head normal form SKI calculus above, this combinator calculus uses a linear resource $C$ to reify reduction contexts.

A Gph-theory for the operational semantics of these combinators has:
\begin{itemize}
  \item one sort $T$, for terms
  \item term constructors
    \[\begin{array}{rl}
      C &: 1 \to T \\ 
      0 &: 1 \to T \\ 
      | &: 1 \to T \\ 
      \for &: 1 \to T \\ 
      ! &: 1 \to T \\ 
      \& &: 1 \to T \\ 
    \end{array}\quad\quad
    \begin{array}{rl}
      * &: 1 \to T \\ 
      S &: 1 \to T \\ 
      K &: 1 \to T \\ 
      I &: 1 \to T \\ 
      () &: T \times T \to T \\ 
    \end{array}\]
  \item structural congruence rules
    \[\begin{array}{rll}
      ((|\; 0)\; P) &= P & \mbox{unit law}\\
      ((|\; ((|\; P)\; Q))\; R) &= ((|\; P)\; ((|\; Q)\; R) & \mbox{associativity}\\
      ((|\; P)\; Q) &= ((|\; Q)\; P) &\mbox{commutativity}\\
    \end{array}\]
  \item reduction rules
    \[\begin{array}{ll}
      \sigma\maps (((S\; P)\; Q)\; R) \Rightarrow ((P\; R)\; (Q\; R)) & \mbox{action of }S \\ 
      \kappa\maps ((K\; P)\; Q) \Rightarrow P & \mbox{action of }K\\ 
      \iota\maps (I\; P) \Rightarrow P & \mbox{action of }I\\ 
      \xi\maps ((|\; C)\; ((|\; ((\for\; (\&\; P))\; Q))\; ((!\; (\&\; P))\; R))) \Rightarrow ((|\; C)\; (Q\; (\&\; R))) & \mbox{communication}\\
      \epsilon\maps ((|\; C)\;(*\; (\&\; P))) \Rightarrow ((|\; C)\; P) & \mbox{evaluation} \\
    \end{array}\]
\end{itemize}

\subsection{Embeddings}
We define an interpretation function $\interp{-}$ from RHO calculus terms into RHO combinators by

\[\begin{array}{rl}
  \interp{0} &= 0 \\
  \interp{\for(x \from \&P)Q} &= ((for\; (\&\; \interp{P}))\; \interp{Q}_x)\\
  \interp{\&P!Q} &= ((!\; (\&\; \interp{P}))\; \interp{Q}) \\
  \interp{P|Q} &= ((|\; \interp{P})\; \interp{Q})\\
  \interp{*\&P} &= (*\; (\&\; \interp{P}))
\end{array}\]
where $\interp{-}_x$ eliminates the free name $x:$
\[\begin{array}{rl}
  \interp{P}_x &= (K\; \interp{P}) \mbox{ where $x$ is not free in } P \\
  \interp{\for(y \from \&P)Q}_x &= ((S\; ((S\; (K for))\; ((S\; (K\; \&))\; \interp{P}_x)))\; \interp{\interp{Q}_y}_x) \\
  \interp{\&P!Q}_x &= ((S\; ((S\; (K\; !))\; ((S\; (K\; \&))\; \interp{P}_x)))\; \interp{Q}_x) \\
  \interp{P|Q}_x &= ((S\; ((S\; (K\; |))\; \interp{P}_x))\; \interp{Q}_x) \\
  \interp{*\&P} &= \left\{\begin{array}{ll}
    ((S\; (K\; *))\; I) & \mbox{when } \&P \equiv_N x\\
    ((S\; (K\; *))\; ((S (K\; \&))\; \interp{P}_x) & \mbox{otherwise.}
  \end{array}\right.
\end{array}\]

Consider the following sorting on RHO combinators:
\[\begin{array}{rl}
  C &: W\\
  0 &: W\\
  | &: W \Rightarrow W \Rightarrow W\\
  \for &: N \Rightarrow (N \Rightarrow W) \Rightarrow W\\
  ! &: N \Rightarrow W \Rightarrow W\\
\end{array}\quad\quad
\begin{array}{rl}
  \& &: W \Rightarrow N\\
  * &: N \Rightarrow W\\
  S &: \forall X,Y,Z.(Z \Rightarrow Y \Rightarrow X) \Rightarrow (Z \Rightarrow Y) \Rightarrow Z \Rightarrow X\\
  K &: \forall X,Y.X \Rightarrow Y \Rightarrow X\\
  I &: \forall X.X \Rightarrow X\\
\end{array}\]

The left- and right-hand sides of each of the structural congruence and rewrite rules have the sort $W,$ the interpretation of any RHO calculus term has the sort $W,$ and the result of eliminating an abstraction has the sort $N \Rightarrow W.$

We define an interpretation function $\interpp{-}$ from $W$-sorted RHO combinators not containing $C$ into the RHO calculus by
\[\begin{array}{rl}
  \interpp{0} &= 0\\
  \interpp{P\;|\;Q} &= \interpp{P} \;|\; \interpp{Q}\\
  \interpp{((\for\; (\&\; P))\; Q)} &= for (\&\interpp{R} \leftarrow \&\interpp{P})\interpp{(Q\; (\&\; R))}\\
  \interpp{((!\; (\&\; P))\; Q)} &= \&\interpp{P}!\interpp{Q}\\
  \interpp{(*(\&\; P))} &= *\&\interpp{P}\\
  \interpp{(((S\; P)\; Q)\; R)} &= \interpp{((P\; R)\; (Q\; R))}\\
  \interpp{((K\; P)\; Q)} &= \interpp{P}\\
  \interpp{(I\; P)} &= \interpp{P}
\end{array}\]
where $R$ is any $W$-sorted RHO combinator.

Some simple calculation shows that
\begin{theorem}
\label{roundtrip}
  $\interpp{\interp{P}}$ is $\alpha$-equivalent to $P$, $Q$ is reducible to $\interp{\interpp{Q}}$ without using the rewrite $\xi,$ and $\interp{\interpp{-}}$ is idempotent.
\end{theorem}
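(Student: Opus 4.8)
The plan is to read Theorem~\ref{roundtrip} as three linked retraction facts: prove the first by structural induction on RHO terms, the second by a normalise-then-induct argument on combinators, and obtain the third as a corollary of the first. Three auxiliary lemmas about abstraction elimination do the real work. \emph{Lemma A} (a $\beta$-like fact): for every RHO term $Q$, name $x$, and $W$-sorted $C$-free combinator $R$, the combinator $(\interp{Q}_x\;(\&\;R))$ reduces under $\sigma,\kappa,\iota$ to a term whose $\interpp{-}$-image is $\alpha$-equivalent to the semantic substitution $Q\{\&\interpp{R}/x\}$. \emph{Lemma B} (the $\eta$-like companion): a $C$-free combinator $f$ of sort $N\Rightarrow W$ reduces under $\sigma,\kappa,\iota$ to $\interp{\interpp{(f\;(\&\;R))}}_{\&\interpp{R}}$, once $R$ is chosen to supply a sufficiently fresh name. \emph{Lemma C}: $\interp{-}$, and each $\interp{-}_x$, is invariant under $\alpha$-equivalence, because the combinator side has no bound names and abstraction elimination never inspects the identity of the variable it removes. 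Lemmas A and C are routine inductions; the content is in B.

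For the first claim I would induct on the structure of $P$. The cases $P=0$, $P=P_1\mid P_2$, $P=\&P_1!Q_1$, and $P=*\&P_1$ collapse to the inductive hypothesis after unfolding the clauses of $\interp{-}$ and $\interpp{-}$, using that structural congruence is a congruence. The case $P=\for(x\from\&P_1)Q_1$ is the only one with content: here $\interpp{\interp{P}}=\for(\&\interpp{R}\from\&\interpp{\interp{P_1}})\interpp{(\interp{Q_1}_x\;(\&\;R))}$ for an arbitrary $W$-sorted $C$-free $R$, and Lemma A together with the inductive hypothesis for $P_1$ identifies this, up to $\alpha$-equivalence and a change of bound name, with $\for(x\from\&P_1)Q_1$. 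Lemma A itself is proved by induction on $Q$, using that the defining clauses of $\interpp{-}$ make it literally constant on $\sigma,\kappa,\iota$-reducts, together with the standard combinatory identity $(\mathrm{elim}_x M)\;N\twoheadrightarrow M\{N/x\}$ specialised to $N=(\&\;R)$; the split in the dereference clause of $\interp{-}_x$ on whether $\&P'\equiv_N x$ is exactly what makes the substitution land on the right occurrences.

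For the second claim I would first note that the sort discipline erases to simple types---each well-sorted term has a monomorphic typing derivation obtained by reading off the instantiations of $S,K,I$---so $\{\sigma,\kappa,\iota\}$ is strongly normalising on well-sorted combinators and $\interpp{-}$ is defined and constant along $\sigma\kappa\iota$-reducts. It then suffices to show, by well-founded induction along $\to_{\sigma\kappa\iota}$, that every $W$-sorted $C$-free $Q$ reduces to $\interp{\interpp{Q}}$. If $Q$ has a $\sigma,\kappa,\iota$-redex, fire it and invoke the inductive hypothesis, since $\interpp{Q}$ is unchanged and these rules---unlike $\xi$ and $\epsilon$---carry no $C$ side condition, so every context is a reduction context for them; in fact, since no rule introduces a $C$, neither $\xi$ nor $\epsilon$ is ever available, which is why the ``without using $\xi$'' clause is free. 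Otherwise $Q$ is $\sigma\kappa\iota$-normal, and the sorting forces its head to be one of $0,\mid,\for,!,*$ (a partially applied $S,K,I$ has an arrow sort, and $C$ is excluded by hypothesis), with the only sort-$N$ normal forms being $(\&\;w)$ for $w$ normal of sort $W$. The cases $0,\mid,!,*$ close under the inductive hypothesis and congruence of $\to_{\sigma\kappa\iota}$. The case $Q=((\for\;(\&\;P))\;f)$ is where Lemma B enters: the inductive hypothesis gives $P\twoheadrightarrow\interp{\interpp{P}}$, Lemma B gives $f\twoheadrightarrow\interp{\interpp{(f\;(\&\;R))}}_{\&\interpp{R}}$, and together these exhibit $Q\twoheadrightarrow((\for\;(\&\;\interp{\interpp{P}}))\;\interp{\interpp{(f\;(\&\;R))}}_{\&\interpp{R}})=\interp{\interpp{Q}}$.

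The third claim is then formal: putting $P=\interpp{Q}$ in the first claim gives $\interpp{\interp{\interpp{Q}}}\equiv_\alpha\interpp{Q}$, and applying $\interp{-}$, with Lemma C turning the $\alpha$-equivalence into an equality of combinators, yields $\interp{\interpp{\interp{\interpp{Q}}}}=\interp{\interpp{Q}}$, i.e., idempotence. I expect the main obstacle to be Lemma B, that is, the $\for$ case of the second claim: one must pin down the $\sigma\kappa\iota$-normal forms of sort $N\Rightarrow W$ precisely enough to see that each reduces back to the canonical abstraction elimination of its decompilation---this is immediate for the $f$ that actually arise, namely those of the form $\interp{Q_1}_x$, but needs the normal-form analysis in general---and one must check that ``any $W$-sorted $R$'' in the clause for $\interpp{-}$ genuinely behaves like a fresh name, so that the choice of $R$ is immaterial up to $\alpha$-equivalence and may be taken fresh. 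By comparison, Lemma A and the reduction of the sort system to simple types are standard combinatory bookkeeping.
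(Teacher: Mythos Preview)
Your proposal is correct and at its core follows the same idea as the paper: the essential content is that abstraction elimination behaves like $\lambda$-abstraction (your Lemma~A), and this is precisely what the paper's appendix verifies case by case, computing $(\interp{P}_x\;x)\twoheadrightarrow_{\sigma\kappa\iota}\interp{P}$ for each clause of $\interp{-}_x$.

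That said, the paper's own argument is little more than ``Some simple calculation shows\dots; see the appendix,'' with everything beyond those six reductions left implicit. You supply considerably more scaffolding that the paper omits: the reduction of the sort discipline to simple types to obtain strong normalisation of $\sigma\kappa\iota$ and hence totality and reduction-invariance of $\interpp{-}$; the head-symbol analysis of $\sigma\kappa\iota$-normal $W$-terms for the second claim; the explicit $\eta$-direction (Lemma~B) at $\for$; and the derivation of idempotence from claim one via $\alpha$-invariance of $\interp{-}$. None of this appears in the paper. What you have added is rigour rather than a different idea, and your identification of Lemma~B---recovering the canonical abstraction elimination from an arbitrary $N\Rightarrow W$ normal form---as the one genuinely delicate step is accurate; the paper simply does not address it.
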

See the appendix for more details.

\subsection{Barbed bisimilarity}
An {\bf observation relation} $\downarrow_\Ncal$ over a set of names $\Ncal$ is the smallest relation satisfying
\begin{center}
  \AXC{$y \in \Ncal$} \AXC{$x \equiv_N y$} \BIC{$x!P \downarrow_\Ncal x$} \DP $\quad$ and $\quad$ \AXC{$P  \downarrow_\Ncal x \mbox{ or } Q \downarrow_\Ncal x$} \UIC{$P\;|\;Q  \downarrow_\Ncal x$} \DP
\end{center}
for the RHO calculus or
\begin{center}
  \AXC{$y \in \Ncal$} \AXC{$x \equiv_N y$} \BIC{$((!\; x)\; P) \downarrow_\Ncal x$} \DP $\quad$ and $\quad$ \AXC{$P  \downarrow_\Ncal x \mbox{ or } Q \downarrow_\Ncal x$} \UIC{$((|\; P)\; Q)  \downarrow_\Ncal x$} \DP.
\end{center}
for the RHO combinators.

We denote eventual reduction by $\to^*$ and write $P \downarrow^*_\Ncal x$ if there exists a process $Q$ such that $P \to^* Q$ and $Q \downarrow_\Ncal x.$

An {\bf $\Ncal$-barbed bisimulation} over a set of names $\Ncal$ is a symmetric binary relation $S_\Ncal$ between agents such that $P \mathop{S_\Ncal} Q$ implies
\begin{enumerate}
  \item if $P \to P'$ then $Q \to^* Q'$ and $P' \mathop{S_\Ncal} Q',$ and 
  \item if $P \downarrow_\Ncal x,$ then $Q \downarrow^*_\Ncal x.$
\end{enumerate}
$P$ is $\Ncal$-barbed bisimilar to $Q,$ written $P \approx Q,$ if $P \mathop{S_\Ncal} Q$ for some $\Ncal$-barbed bisimulation $S_\Ncal.$

\subsection{Faithfulness}

\begin{theorem}
  $P \approx_{\mbox{\tiny calc}} Q \iff  ((|\; C)\; \interp{P}) \approx_{\mbox{\tiny comb}} ((|\; C)\; \interp{Q})$.
\end{theorem}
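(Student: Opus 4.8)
The plan is to establish the biconditional by transporting barbed bisimulations back and forth across the embedding $\interp{-}$ (and its partial inverse $\interpp{-}$), using Theorem~\ref{roundtrip} as the bridge. Concretely, for the forward direction I would start with an $\Ncal$-barbed bisimulation $S_{\mathrm{calc}}$ on RHO processes witnessing $P \approx_{\mathrm{calc}} Q$ and define a candidate relation on combinator terms by $S_{\mathrm{comb}} = \{ (((|\; C)\; \interp{P'}),\ ((|\; C)\; \interp{Q'})) : P' \mathop{S_{\mathrm{calc}}} Q' \}$, closed up under structural congruence and under the combinator reductions $\sigma,\kappa,\iota,\epsilon$ that do not correspond to communications. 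I would then check the two bisimulation clauses. For the reduction clause, I need that any step of $((|\; C)\; \interp{P'})$ is matched: the non-$\xi$ steps only rearrange/evaluate the encoding and by Theorem~\ref{roundtrip} lead to a term reducible to $((|\; C)\; \interp{P''})$ for a process $P''$ with $P' \equiv P''$ or reachable from $P'$, while a $\xi$ step corresponds exactly to a communication step $P' \to P''$ in the calculus, which $Q'$ matches by $Q' \to^* Q''$ with $P'' \mathop{S_{\mathrm{calc}}} Q''$; applying $\interp{-}$ and prefixing $((|\; C)\; -)$ gives the required matching reduction sequence. For the barb clause, I would show $((|\; C)\; \interp{P'}) \downarrow_\Ncal x$ iff $P' \downarrow^*_\Ncal x$ — one direction is immediate from the definition of the observation relation on combinators (an output $((!\; x)\; -)$ at top level under $|$ corresponds to an output $x!-$ under $|$ in the process), and the other may require a few evaluation steps to expose the barb, hence the $\downarrow^*$.

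For the reverse direction I would run the same argument through $\interpp{-}$: given a combinator bisimulation relating $((|\; C)\; \interp{P})$ and $((|\; C)\; \interp{Q})$, define a process relation by applying $\interpp{-}$ to the $W$-sorted combinator terms (dropping the leading $((|\; C)\; -)$), closed under $\equiv$, and verify the clauses using the reduction clauses of $\interpp{-}$ together with the fact that $\interpp{\interp{P}}$ is $\alpha$-equivalent to $P$. The key compatibility facts I will need are: (i) a process reduction $P' \to P''$ lifts to a combinator reduction sequence $((|\; C)\; \interp{P'}) \to^* ((|\; C)\; \interp{P''})$ using exactly one $\xi$ and some bookkeeping $\sigma,\kappa,\iota$ steps (this is essentially a simulation lemma for the abstraction-elimination encoding $\interp{-}_x$, mirroring the correctness of SKI abstraction elimination); and (ii) conversely every combinator reduction sequence from $((|\; C)\; \interp{P'})$ can be analyzed into a $\xi$-reduction, which maps to a communication, and a block of administrative reductions, which map to $\equiv$ or to no-ops under $\interpp{-}$. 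Both directions then amount to checking that the candidate relations satisfy the two defining clauses of $\Ncal$-barbed bisimulation, so the structure is two symmetric "the transported relation is a bisimulation" arguments.

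The main obstacle I expect is the simulation lemma (i) and its converse: I must show that the encoding of a redex ${\interp{\for(y \from \&P)Q} \;|\; \interp{\&P'!R}}$ (placed in the $((|\; C)\; -)$ context and under the commutative-monoid structure of $|$) actually reduces, via $\xi$ and a controlled sequence of $\sigma/\kappa/\iota$ steps, to the encoding of $P\{\&R/y\}$ — i.e.\ that applying the eliminated continuation $\interp{Q}_y$ to $(\&\; R)$ via the $S$/$K$/$I$ machinery really does perform the substitution, matching the process-side rule $P\{\&Q/y\}$. This is the place where one needs the correctness of abstraction elimination, the interplay between name equivalence $\equiv_N$ and $\equiv$, and the handling of the reflective $\&/*$ pair (the $\epsilon$ rule). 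I would isolate it as a lemma: \emph{for all closed $Q$ and names $\&R$, the combinator $(\interp{Q}_y\; (\&\; R))$ reduces, without using $\xi$, to $\interp{Q\{\&R/y\}}$ (up to structural congruence)}, proved by structural induction on $Q$ following the clauses defining $\interp{-}_x$. A secondary subtlety is making the candidate relations genuinely symmetric and closed under the structural congruence / $\alpha$-equivalence that the definition of barbed bisimulation already absorbs, and ensuring the barb clause respects $\equiv_N$ on the observed name $x$; these are routine once the simulation lemma is in hand, so I would state them briefly and refer to the appendix for the explicit induction, in keeping with the paper's style.
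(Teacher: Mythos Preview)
Your proposal is correct and follows essentially the same approach as the paper's proof sketch: transport bisimulations across $\interp{-}$ and $\interpp{-}$ using Theorem~\ref{roundtrip}, observe that $C$ stays at the top so only the outermost context reduces, and separate combinator reductions into one $\xi$ step (matching a calculus communication) plus administrative $\sigma,\kappa,\iota,\epsilon$ steps that neither create nor destroy barbs. The simulation lemma you isolate---that $(\interp{Q}_y\;(\&\;R))$ reduces without $\xi$ to $\interp{Q\{\&R/y\}}$---is exactly what the appendix's abstraction-elimination calculations are establishing, so your plan is a faithful (and more explicit) elaboration of the paper's sketch.
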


\begin{proof}[Proof sketch]
The only occurrence of $C$ on the right is at the topmost context and the rewrite rules preserve the location of $C$, so the only reduction context is the topmost one.  The rest follows from the two interpretation functions and theorem \ref{roundtrip}.  In particular, while the only reduction rule in the RHO calculus is synchronizing on a name, there are extra reduction rules for the RHO combinators; however, these extra reduction rules never send or receive on a name and never prevent sending or receiving on a name.  Therefore, each synchronization in the evaluation of a RHO calculus term corresponds to a synchronization in the corresponding RHO combinator term and some number of reductions of $S,K,I,$ or evaluating a quoted process.
\end{proof}

In fact, we believe a much stronger property than bisimilarity should hold: since $S,K,$ and $I$ are only used for eliminating dummy variables and the $\epsilon$ reduction plays the role of semantic substitution, $\interp{\interpp{-}}$ should pick out a normal form for a RHO combinator.  We should get a set of normal-form equivalence classes of $W$-sorted RHO combinators that is isomorphic to the set of $\alpha$-equivalence classes of RHO calculus terms.  Then we should get
\[ P \xrightarrow{\mbox{\tiny comm}} P' \quad \iff \quad \interp{P} \xrightarrow{\xi} \interp{P'}\]
and
\[ Q \xrightarrow{\xi} Q' \quad \iff \quad \interpp{Q} \xrightarrow{\mbox{\tiny comm}} \interpp{Q'}, \]
where we now regard the left and right sides as being equivalence classes.

\section{Conclusion and future work}
This paper is part of a pair of papers demonstrating that reflection
provides a powerful technique for treating nominal phenomena as
syntactic sugar, thus paving the way for simpler semantic treatments
of richly featured calculi, such as the {\pic} and other calculi of
concurrency. We illustrated the point by providing faithful semantics
of both the $\lambda$-calculus and the {\pic} in terms of graph-enriched
Lawvere theories. This work may be considered preparatory for a more
elaborate study of logics for concurrency in which the nominal
phenomena have logical status, but may be treated in a technically
simpler fashion.

\section{Appendix: abstraction elimination calculations}
\[\begin{array}{rl}
  & ((K\; \interp{P})\; x) \\
  = & \interp{P}\\
  \\
  
  & (((S\; ((S\; (K\; for))\; ((S\; (K\; \&))\; \; \interp{P}_x)))\; \interp{\interp{Q}_y}_x)\; x)\\
  =&  ((((S\; (K\; for))\; ((S\; (K\; \&))\; \; \interp{P}_x))\; x)\; (\interp{\interp{Q}_y}_x\; x))\\
  =&  ((((K\; for)\; x)\; (((S\; (K\; \&))\; \; \interp{P}_x)\; x))\; \interp{Q}_y)\\
  =&  ((for\; (((K\; \&)\; x)\; (\interp{P}_x\; x)))\; \interp{Q}_y)\\
  =&  ((for\; (\&\; \interp{P}))\; \interp{Q}_y)\\
  \\
  & (((S\; ((S\; (K\; !))\; ((S\; (K\; \&))\; \interp{P}_x)))\; \interp{Q}_x)\; x)\\
  =&  (((((S\; (K\; !))\; ((S\; (K\; \&))\; \interp{P}_x)))\; x)\; (\interp{Q}_x\; x))\\
  =&  ((((K\; !)\; x)\; (((S\; (K\; \&))\; \interp{P}_x)\; x))\; \interp{Q})\\
  =&  ((!\; (((K\; \&)\; x)\; (\interp{P}_x\; x)))\; \interp{Q})\\
  =&  ((!\; (\&\; \interp{P}))\; \interp{Q})\\
  \\
  & (((S\; ((S\; (K\; |))\; \interp{P}_x))\; \interp{Q}_x)\; x)\\
  =& ((((S\; (K\; |))\; \interp{P}_x)\; x)\; (\interp{Q}_x\; x))\\
  =& ((((K\; |)\; x)\; (\interp{P}_x\; x))\; \interp{Q})\\
  =& (|\; \interp{P})\; \interp{Q})\\
  \\
  & (((S\; (K\; *))\; I)\; x)\\
  =&  (((K\; *)\; x)\; (I\; x))\\
  =&  (*\; x)\\
  \\
  & (((S\; (K\; *))\; ((S\; (K\; \&))\; \interp{P}_x))\; x)\\
  =&  (((K\; *)\; x)\; (((S\; (K\; \&))\; \interp{P}_x)\; x))\\
  =&  (*\; (((K\; \&)\; x)\; (\interp{P}_x\; x)))\\
  =&  (*\; (\&\; \interp{P}))\\
\end{array}\]

\bibliographystyle{amsplain}
\bibliography{calco}
\end{document}